\newcommand{\added}[1]{#1}
\newcommand{\BGP}{\ensuremath{\mathsf{BGP }}}
\newcommand{\XORSample}{\ensuremath{\mathsf{XORSample }}}
\newcommand{\XORSampleprime}{\ensuremath{\mathsf{XORSample}'}}
\newcommand{\BoundedSAT}{\ensuremath{\mathsf{BoundedSAT}}}
\newcommand{\UniformWitness}{\ensuremath{\mathsf{UniWit}}}
\newcommand{\SATModelCount}{\ensuremath{\mathsf{SATModelCount}}}
\newcommand{\mc}[1]{\ensuremath{\mathcal{#1}}}
\newcommand{\prob}{\ensuremath{\mathsf{Pr}}}
\newcommand{\expect}{\ensuremath{\mathsf{E}}}
\newcommand{\killthis}[1]{}
\begin{document}
		
\title{A Scalable and Nearly Uniform Generator of SAT Witnesses\thanks{The final version will appear in the Proceedings of CAV'13 and will be available at \url{link.springer.com}. Work supported in part by NSF grants CNS 1049862 and CCF-1139011,
by NSF Expeditions in Computing project "ExCAPE: Expeditions in Computer
Augmented Program Engineering", by BSF grant 9800096, by gift from
Intel, by a grant from Board of Research in Nuclear Sciences, India, and by the Shared University Grid at Rice funded by NSF under Grant EIA-0216467, and a partnership between Rice University, Sun Microsystems, and Sigma Solutions, Inc.}}
\author{Supratik Chakraborty\inst{1} \and Kuldeep S. Meel\inst{2} \and Moshe Y. Vardi\inst{2}}
\institute{Indian Institute of Technology Bombay, India
\and
Department of Computer Science, Rice University}
\date{Sept.~6, 2012}

\maketitle
\pagestyle{empty}
\begin{abstract}
Functional verification constitutes one of the most challenging tasks
in the development of modern hardware systems, and simulation-based
verification techniques dominate the functional verification
landscape. A dominant paradigm in simulation-based verification is directed
random testing, where a model of the system is simulated with a set of 
random test stimuli that are uniformly or near-uniformly distributed 
over the space of all stimuli satisfying a given set of constraints. 
Uniform or near-uniform generation of solutions for
large constraint sets is therefore a problem of theoretical and
practical interest. For boolean constraints, prior work offered
heuristic approaches with no guarantee of performance, and theoretical
approaches with proven guarantees, but poor performance in
practice. We offer here a new approach with theoretical performance
guarantees and demonstrate its practical utility on large constraint
sets.
\end{abstract}
\section{Introduction}
Functional verification constitutes one of the most challenging tasks
in the development of modern hardware systems.  Despite significant
advances in formal verification over the last few decades, there is a
huge mismatch between the sizes of industrial systems and the
capabilities of state-of-the-art formal-verification
tools~\cite{Ben05}.  Simulation-based verification techniques
therefore dominate the functional-verification
landscape~\cite{chang2008functional}.  A dominant paradigm in
simulation-based verification is directed random testing.  In this
paradigm, an operational (usually, low-level) model of the system is
simulated with a set of random test stimuli satisfying a set of
\emph{constraints}
\cite{chandra-Verification,NRJKVMS06,Yuan2004}.  The
simulated behavior is then compared with the expected behavior, and
any mismatch is flagged as indicative of a bug.  The constraints that
stimuli must satisfy typically arise from various sources such as
domain and application-specific knowledge, architectural and
environmental requirements, specifications of corner-case scenarios,
and the like.  Test requirements from these varied sources are
compiled into a set of constraints and fed to a constraint solver to
obtain test stimuli.  Developing constraint solvers (and test
generators) that can reason about large sets of constraints is
therefore an extremely important activity for industrial test and
verification applications~\cite{GABK11}.

Despite the diligence and insights that go into developing constraint
sets for generating directed random tests, the complexity of modern
hardware systems makes it hard to predict the effectiveness of any
specific test stimulus.  It is therefore common practice to generate a
large number of stimuli satisfying a set of constraints.  Since every
stimulus is \emph{a priori} as likely to expose a bug as any other
stimulus, it is desirable to sample the solution space of the
constraints uniformly or near-uniformly (defined formally below) at
random~\cite{NRJKVMS06}.  A naive way to accomplish this is to first
generate all possible solutions, and then sample them uniformly.
Unfortunately, generating all solutions is computationally prohibitive
(and often infeasible) in practical settings of directed random
testing.  For example, we have encountered systems of constraints
where the expected number of solutions is of the order of $2^{100}$,
and there is no simple way of deriving one solution from another.  It
is therefore interesting to ask: \emph{Given a set of constraints, can
  we sample the solution space uniformly or near-uniformly, while
  scaling to problem sizes typical of testing/verification scenarios?}
An affirmative answer to this question has implications not only for
directed random testing, but also for other applications like
probabilistic reasoning, approximate model counting and Markov logic
networks~\cite{Bacchus2003,Roth1996}.

In this paper, we consider Boolean constraints in conjunctive normal
form (CNF), and address the problem of near-uniform generation of
their solutions, henceforth called \emph{SAT Witnesses}. \added{This
  problem has been of \added{long-standing theoretical}
  interest~\cite{Sipser83,Stockmeyer83}.  Industrial approaches to
  solving this problem either rely on ROBDD-based
  techniques~\cite{Yuan2004} \added{, which do not scale} well (see, for example,
  the comparison in~\cite{KitKue2007}), or use heuristics that offer
  no guarantee of performance or uniformity when applied to large
  problem instances\footnote{Private communication: R. Kurshan}.}
Prior published work in this area broadly belong to one of two
categories.  In the first category
\added{~\cite{Selman-Sampling,Kirkpatrick83,Gomes-Sampling,KitKue2007}},
the focus is on heuristic sampling techniques that scale to large
systems of constraints.  Monte Carlo Markov Chain (MCMC) methods and
techniques based on random seedings of SAT solvers belong to this
category.  However, these methods either offer very weak or no
guarantees on the uniformity of sampling (see~\cite{KitKue2007} for a
comparison), or require the user to provide hard-to-estimate
problem-specific parameters that crucially affect the performance and
uniformity of sampling.  In the second category of
work~\cite{Bellare98uniformgeneration,Jerr,Yuan2004}, the focus is on
stronger guarantees of uniformity of sampling.  Unfortunately, our
experience indicates that these techniques do not scale even to
relatively small problem instances (involving few tens of variables)
in practice.

The work presented in this paper tries to bridge the above mentioned
extremes.  Specifically, we provide guarantees of near-uniform
sampling, and of a bounded probability of failure, without the user
having to provide any hard-to-estimate parameters.  We also
demonstrate that our algorithm scales in practice to constraints
involving thousands of variables. Note that there is evidence that
uniform generation of SAT witnesses is harder than SAT solving
\cite{Jerr}.  Thus, while today's SAT solvers are able to handle
hundreds of thousands of variables and more, we believe that scaling of
our algorithm to thousands of variables is a major improvement in this
area.  Since a significant body of constraints that arise in
verification settings and in other application areas (like
probabilistic reasoning) can be encoded as Boolean constraints, our
work opens new directions in directed random testing and in these
application areas.

The remainder of the paper is organized as follows.  In
Section~\ref{sec:prelims-for-bellare}, we review preliminaries and
notation needed for the subsequent discussion. In
Section~\ref{sec:bellare-algorithm}, we give an overview of some
algorithms presented in earlier work that come close to our work.
Design choices behind our algorithm, some implementation issues, and a
mathematical analysis of the guarantees provided by our algorithm are
discussed in Section~\ref{sec:our-algorithm}.
Section~\ref{sec:experiments} discusses experimental results on a
large set of benchmarks.  Our experiments demonstrate that our
algorithm is more efficient in practice and generates witnesses that
are more evenly distributed than those generated by the best known
alternative algorithm that scales to comparable problem sizes.
Finally, we conclude in Section~\ref{sec:conclusion}.

\section{Notation and Preliminaries}\label{sec:prelims-for-bellare}

Our algorithm can be viewed as an adaptation of the algorithm proposed
by Bellare, Goldreich and Petrank~\cite{Bellare98uniformgeneration}
for uniform generation of witnesses for \mc{NP}-relations. In the
remainder of the paper, we refer to Bellare et al.'s algorithm as
the {\BGP} algorithm (after the last names of the authors). Our
algorithm also has similarities with algorithms presented by Gomes,
Sabharwal and Selman~\cite{Gomes-Sampling} for near-uniform sampling
of SAT witnesses.  We begin with some notation and preliminaries
needed to understand these related work.

Let $\Sigma$ be an alphabet and $R \subseteq \Sigma^* \times \Sigma^*$
be a binary relation.  We say that $R$ is an \mc{NP}-relation if $R$
is polynomial-time decidable, and if there exists a polynomial
$p(\cdot)$ such that for every $(x, y) \in R$, we have $|y| \le
p(|x|)$.  Let $L_R$ be the language $\{x \in \Sigma^* \mid \exists y
\in \Sigma^*,\, (x, y) \in R\}$.  The language $L_R$ is said to be in
\mc{NP} if $R$ is an \mc{NP}-relation.  The set of all satisfiable
propositional logic formulae in CNF is known to be a language in
\mc{NP}.  Given $x \in L_R$, a \emph{witness} of $x$ is a string $y
\in \Sigma^*$ such that $(x, y) \in R$.  The set of all witnesses of
$x$ is denoted $R_x$.  For notational convenience, let us fix $\Sigma$
to be $\{0, 1\}$ without loss of generality.  If $R$ is an
\mc{NP}-relation, we may further assume that for every $x \in L_R$,
every witness $y \in R_x$ is in $\{0, 1\}^n$, where $n = p(|x|)$ for
some polynomial $p(\cdot)$.

Given an \mc{NP} relation $R$, a \emph{probabilistic generator} of
witnesses for $R$ is a probabilistic algorithm $\mathcal{G}(\cdot)$
that takes as input a string $x \in L_R$ and generates a random
witness of $x$.  Throughout this paper, we use $\prob\left[X\right]$
to denote the probability of outcome $X$ of sampling from a
probability space.  A \emph{uniform generator} $\mathcal{G}^u(\cdot)$
is a probabilistic generator that guarantees
$\prob\left[\mathcal{G}^u(x) = y\right] = 1/|R_x|$ for every witness
$y$ of $x$.  A \emph{near-uniform generator} $\mathcal{G}^{nu}(\cdot)$
relaxes the guarantee of uniformity, and ensures that
$\prob\left[\mathcal{G}^{nu}(x) = y\right] \geq c\cdot(1/|R_x|)$ for a
constant $c$, where $0 < c \leq 1$.  Clearly, the larger $c$ is, the
closer a near-uniform generator is to being a uniform generator.  Note
that near-uniformity, as defined above, is a more relaxed
approximation of uniformity compared to the notion of ``almost
uniformity'' introduced in ~\cite{Bellare98uniformgeneration,Jerr}.
In the present work, we sacrifice the guarantee of uniformity and
settle for a near-uniform generator in order to gain performance
benefits.  Our experiments, however, show that the witnesses generated
by our algorithm are fairly uniform in practice.  Like previous work
~\cite{Bellare98uniformgeneration,Jerr}, we allow our generator to
occasionally ``fail", i.e. the generator may occasionaly output no witness,
but a special failure symbol $\bot$.  A generator that occasionally
fails must have its failure probability bounded above by $d$, where
$d$ is a constant strictly less than $1$.

A key idea in the {\BGP} algorithm for uniform generation of
witnesses for \mc{NP}-relations is to use $r$-wise independent hash
functions that map strings in $\{0, 1\}^n$ to $\{0, 1\}^m$, for $m \le
n$.  The objective of using these hash functions is to partition $R_x$
with high probability into a set of ``well-balanced'' and ``small''
cells.  We follow a similar idea in our work, although there are
important differences.  Borrowing related notation and terminology
from~\cite{Bellare98uniformgeneration}, we give below a brief overview
of $r$-wise independent hash functions as used in our context.

Let $n, m$ and $r$ be positive integers, and let $H(n,m,r)$ denote a
family of $r$-wise independent hash functions mapping $\{0, 1\}^n$ to
$\{0, 1\}^m$.  We use $h \xleftarrow{R} H(n,m,r)$ to denote the act of
choosing a hash function $h$ uniformly at random from $H(n,m,r)$.  By
virtue of $r$-wise independence, for each $\alpha_1, \ldots \alpha_r
\in \{0,1\}^m $ and for each distinct $y_1, \ldots y_r \in \{0,1\}^n$,
$\prob\left[\bigwedge_{i=1}^r h(y_i) = \alpha_i: h
  \xleftarrow{R} H(n, m, r)\right] = 2^{-mr}$.

For every $\alpha \in \{0, 1\}^m$ and $h \in H(n,m,r)$, let
$h^{-1}(\alpha)$ denote the set $\{y \in \{0, 1\}^n \mid h(y) =
\alpha\}$.  Given $R_x \subseteq \{0, 1\}^n$ and $h \in H(n, m, r)$, we
use $R_{x, h, \alpha}$ to denote the set $R_x \cap h^{-1}(\alpha)$.
If we keep $h$ fixed and let $\alpha$ range over $\{0, 1\}^m$, the
sets $R_{x, h, \alpha}$ form a partition of $R_x$.  Following the
notation of Bellare et al., we call each element of such a parition a
\emph{cell} of $R_x$ induced by $h$.  
It has been argued
in~\cite{Bellare98uniformgeneration} that if $h$ is chosen uniformaly
at random from $H(n, m,r)$ for $r \ge 1$, the expected size of $R_{x, h,
  \alpha}$, denoted $\expect\left[|R_{x,h,\alpha}|\right]$, is $|R_x|/2^m$, 
for each $\alpha \in \{0, 1\}^m$.


In~\cite{Bellare98uniformgeneration}, the authors suggest using
polynomials over finite fields to generate $r$-wise independent hash
functions.  We call these \emph{algebraic} hash functions.  Choosing a
random algebraic hash function $h \in H(n, m, r)$ requires choosing a
sequence $(a_0, \ldots a_{r-1})$ of elements in the field $\mathbb{F}
= \mathrm{GF}(2^{\max(n,m)})$, where $GF(2^k)$ denotes the Galois
field of $2^k$ elements.  Given $y \in \{0, 1\}^n$, the hash value
$h(y)$ can be computed by interpretting $y$ as an element of $F$,
computing $\Sigma_{j=0}^{r-1}a_jy^j$ in $F$, and selecting $m$ bits of
the encoding of the result.  The authors
of~\cite{Bellare98uniformgeneration} suggest polynomial-time
optimizations for operations in the field $F$.  Unfortunately, even
with these optimizations, computing algebraic hash functions is 
quite expensive in practice when non-linear terms are involved,
as in $\Sigma_{j=0}^{r-1}a_jy^j$,

Our approach uses computationally efficient linear hash functions.  As
we show later, pairwise independent hash functions
suffice for our purposes.  The literature describes several families
of efficiently computable pairwise independent hash functions.  One
such family, which we denote $H_{conv}(n, m, 2)$, is based on the
\emph{wrapped convolution} function~\cite{Mansour}.  For $a \in \{0,
1\}^{n+m-1}$ and $y \in \{0, 1\}^n$, the wrapped convolution $c = (a
\bullet y)$ is defined as an element of $\{0, 1\}^m$ as follows: for
each $i \in \{1, \ldots m\}$, $c[i] = \bigoplus_{j=1}^n (y[j] \wedge
a[i+j-1])$, where $\bigoplus$ denotes logical xor and $v[i]$ denotes
the $i^{th}$ component of the bit-vector $v$.  The family $H_{conv}(n,
m, 2)$ is defined as $\{h_{a,b}(y) = (a \bullet y) \oplus_m b \mid a
\in \{0, 1\}^{n+m-1}, b \in \{0, 1\}^m\}$, where $\oplus_m$ denotes
componentwise xor of two elements of $\{0, 1\}^m$.  By randomly
choosing $a$ and $b$, we can randomly choose a function $h_{a,b}(x)$
from this family.  It has been shown in~\cite{Mansour} that
$H_{conv}(n, m, 2)$ is pairwise independent.  Our implementation of a
near-uniform generator of CNF SAT witnesses uses $H_{conv}(n, m, 2)$.

\section{Related Algorithms in Prior Work}\label{sec:bellare-algorithm}
We now discuss two algorithms that are closely related to our work.
In 1998, Bellare et al.~\cite{Bellare98uniformgeneration} proposed the
{\BGP} algorithm, showing that uniform generation of \mc{NP}-witnesses
can be achieved in probabilistic polynomial time using an
\mc{NP}-oracle.  This improved on previous work by Jerrum, Valiant and
Vazirani~\cite{Jerr}, who showed that uniform generation can be
achieved in probabilistic polynomial time using a $\Sigma_2^P$ oracle,
and almost-uniform generation (as defined in~\cite{Jerr}) can be
achieved in probabilistic polytime using an \mc{NP} oracle.


Let $R$ be an \mc{NP}-relation over $\Sigma$.  The {\BGP} algorithm
takes as input an $x \in L_R$ and either generates a witness that is
uniformly distributed in $R_x$, or produces a symbol $\bot$ (indicating a
failed run).  The pseudocode for the algorithm is presented below.  In
the presentation, we assume w.l.o.g. that $n$ is an integer such that $R_x
\subseteq \{0,1\}^n$.  We also assume access to \mc{NP}-oracles to
answer queries about cardinalities of witness sets and also to
enumerate small witness sets.
\begin{tabbing}	
xx \= xx \= xx \= xx \= xx \= \kill
\noindent {\bfseries \textsf{Algorithm} $\BGP(x):$}\\ 
/* Assume $R_x \subseteq \{0, 1\}^n$ */\\
1:\> $\mathrm{pivot} \leftarrow 2n^2$;\\
2:\> {\bfseries if} ($|R_x| \le \mathrm{pivot}$)\\
3:\> \> List all elements $y_1, \ldots y_{|R_x|}$ of $R_x$; \\ 
4:\> \> Choose $j$ at random from $\{1, \ldots |R_x|\}$, and {\bfseries
return} $y_j$;\\
5:\> {\bfseries else} \\
6:\> \> $l \leftarrow 2\lceil \log_2n \rceil$; $i \leftarrow \l-1$;\\
7:\> \> {\bfseries repeat}\\
8:\> \> \> $i \leftarrow i+1$;\\
9:\> \> \> Choose $h$ at random from $H(n, i-l, n)$; \\
10:\> \> {\bfseries until} ($\forall \alpha \in \{0,1\}^{i-l},|R_{x,h,\alpha}| \le 2n^{2}$) or ($i = n-1$);\\
11:\> \> {\bfseries if} ($\exists \alpha \in \{0,1\}^{i-l},|R_{x,h,\alpha}| > 2n^{2}$) {\bfseries return} $\bot$;\\
12:\> \> Choose $\alpha$ at random from $\{0,1\}^{i-l}$;\\
13:\> \> List all elements $y_1, \ldots y_{|R_{x, h, \alpha}|}$ of $R_{x, h,
\alpha}$;\\
14:\> \> Choose $j$ at random from $\{1, \ldots \mathrm{pivot}\}$; \\
15:\> \> {\bfseries if} $j \le |R_{x, h, \alpha}|$, {\bfseries return}
$y_j$;\\
16:\> \> {\bfseries else} {\bfseries return} $\bot$;\\
\end{tabbing}	
\noindent 
For clarity of exposition, we have made a small adaptation to the
algorithm originally presented in~\cite{Bellare98uniformgeneration}.
Specifically, if $h$ does not satisfy ($\forall \alpha \in \{0,1\}^{i-l},|R_{x,h,\alpha}| \le 2n^{2}$) when the
loop in lines 7--10 terminates, the original algorithm forces a
specific choice of $h$.  Instead, algorithm {\BGP} simply outputs
$\bot$ (indicating a failed run) in this situation.  A closer look at
the analysis presented in~\cite{Bellare98uniformgeneration} shows that
all results continue to hold with this adaptation.  The authors
of~\cite{Bellare98uniformgeneration} use algebraic hash functions and
random choices of $n$-tuples in $\mathrm{GF}(2^{\max(n,i-l)})$ to
implement the selection of a random hash function in line 9 of the
pseudocode.  The following theorem summarizes the key properties of
the BGP algorithm ~\cite{Bellare98uniformgeneration}.
\begin{theorem}\label{theorem:bgp-bounds}
If a run of the {\BGP} algorithm is successful, the probability that
$y \in R_x $ is generated by the algorithm is independent of $y$.
Further, the probability that a run of the algorithm fails is $\leq 0.8$.
\end{theorem}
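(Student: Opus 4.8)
The plan is to treat the two branches of the algorithm separately. Whether a run enters the enumeration branch (lines 2--4) or the hashing branch (lines 5--16) is determined by $x$ alone, via the test $|R_x| \le \mathrm{pivot} = 2n^2$. In the enumeration branch the algorithm returns a uniformly random element of $R_x$ and never outputs $\bot$, so both parts of the theorem hold trivially; all the content is in the hashing branch. There I will prove the uniformity-on-success claim by a conditioning argument on the pair $(i,h)$ --- the loop counter and the hash function --- with which a run reaches line 12, and I will prove the failure bound by a concentration analysis of the cell sizes $|R_{x,h,\alpha}|$ under a randomly drawn $n$-wise independent hash function.

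For the uniformity claim, fix an arbitrary $y \in R_x$, and call a pair $(i,h)$ \emph{admissible} if a run can reach line 12 with loop counter $i$ and hash function $h$; equivalently, $h \in H(n,i-l,n)$ and all $2^{i-l}$ cells of $R_x$ induced by $h$ have size at most $2n^2$. The probability that a run reaches line 12 with exactly this $(i,h)$ is the product of the probabilities that the fresh hashes drawn at iterations $l,\ldots,i-1$ each leave some cell larger than $2n^2$, times $1/|H(n,i-l,n)|$, and none of these factors depends on $y$. Condition on this event. Then $y$ lies in the single cell $R_{x,h,h(y)}$, and the run outputs $y$ iff it picks $\alpha = h(y)$ in line 12 (probability $2^{-(i-l)}$) and then, in line 14, picks $j$ equal to the position of $y$ in the enumeration of $R_{x,h,h(y)}$; since that cell has at most $\mathrm{pivot}$ elements, this position lies in $\{1,\ldots,\mathrm{pivot}\}$ and is hit with probability $1/\mathrm{pivot}$. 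Hence $\prob[\,\mathrm{output}=y \mid (i,h)\,] = 2^{-(i-l)}/\mathrm{pivot}$, which does not depend on $y$. Summing over admissible $(i,h)$ weighted by the $y$-independent quantities above shows $\prob[\,\mathrm{output}=y\,]$ is the same for every $y \in R_x$, and dividing by $\prob[\,\mathrm{success}\,]$ (also $y$-independent) gives the first part of the theorem; in the enumeration branch $\prob[\,\mathrm{output}=y\,] = 1/|R_x|$, so the claim holds there too.

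For the failure bound, assume $|R_x| > 2n^2$ (otherwise no failure occurs). Write $\mu_i = |R_x|/2^{i-l} = \expect[\,|R_{x,h,\alpha}|\,]$ for the expected cell size at iteration $i$, and let $i^\ast$ be the least $i$ with $\mu_i \le 2n^2$; since $|R_x| > 2n^2$ we have $i^\ast > l$ and $\mu_{i^\ast} = \mu_{i^\ast-1}/2 > n^2$. Two observations drive the argument. First, the loop cannot terminate before $i^\ast$, since if all cells are $\le 2n^2$ then so is their average $\mu_i$. Second, at iteration $i^\ast+1$ we have $\mu_{i^\ast+1} = \mu_{i^\ast}/2 \le n^2$, so $\mathrm{pivot} = 2n^2 \ge 2\mu_{i^\ast+1}$; a standard higher-moment tail inequality for a sum of $n$-wise independent $\{0,1\}$ variables then bounds $\prob[\,|R_{x,h,\alpha}| > 2n^2\,]$ by a super-polynomially small quantity --- small enough that, union-bounded over the at most $2^n$ cells, a fresh $h$ fails to make all cells $\le 2n^2$ only with probability $o(1)$. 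So, except with probability $o(1)$, the loop terminates at some $\hat\imath \in \{i^\ast, i^\ast+1\}$, in particular before the forced exit at $i = n-1$ (which holds once $n$ is large enough relative to the polynomial witness bound, a without-loss-of-generality assumption). Then line 11 is not executed, and averaging over the uniform $\alpha$ in line 12 the probability of failing in line 16 is $1 - \expect_\alpha[\,|R_{x,h,\alpha}|\,]/\mathrm{pivot} = 1 - \mu_{\hat\imath}/(2n^2) \le 1 - 1/4$, using $\mu_{\hat\imath} \ge \mu_{i^\ast}/2 > n^2/2$. Adding the $o(1)$ overshoot probability yields $\prob[\,\mathrm{fail}\,] \le 0.8$.

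The delicate step is the concentration bound. Pairwise independence with Chebyshev gives only $\prob[\,|R_{x,h,\alpha}| > 2n^2\,] = O(1/n^2)$, which is hopeless once union-bounded against up to $2^n$ cells; the reason line 9 draws $h$ from the $n$-wise independent family $H(n,m,n)$ is exactly that the $n$th-moment tail bound decays fast enough to survive that union bound. One must also track constants, since it is the slack between ``$\hat\imath \le i^\ast+1$'' and the threshold $\mathrm{pivot} = 2n^2$ that makes $1 - \mu_{\hat\imath}/(2n^2)$ fall below the stated $0.8$, and the remaining small-$n$ and near-$2^n$ corner cases must be absorbed by taking the polynomial bound on witness length sufficiently large.
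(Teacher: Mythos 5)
The paper does not actually prove Theorem~\ref{theorem:bgp-bounds}: it imports it from~\cite{Bellare98uniformgeneration}, adding only the remark that the adaptation of returning $\bot$ (rather than forcing a specific $h$) preserves the original analysis. So there is no in-paper argument to compare yours against; what you have written is a self-contained reconstruction along the standard BGP lines, and it is essentially sound. The uniformity argument is exactly right: conditioning on the pair $(i,h)$ with which line 12 is reached, the probability of emitting $y$ is $2^{-(i-l)}/\mathrm{pivot}$ irrespective of which cell $y$ falls in and of its position within that cell --- this is precisely the point of drawing $j$ from $\{1,\ldots,\mathrm{pivot}\}$ rather than from $\{1,\ldots,|R_{x,h,\alpha}|\}$ --- and every weight in the sum over admissible $(i,h)$ depends only on $x$. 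The failure analysis (the loop cannot stop before $i^\ast$ by averaging; at $i^\ast+1$ the mean cell size is at most $n^2$, so an $n$-th-moment tail bound survives the union bound over cells; conditional failure at line 16 is $1-\mu_{\hat\imath}/(2n^2)<3/4$) is also the standard route, and you correctly identify why $n$-wise independence, not pairwise, is indispensable here.

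Two points deserve more than a parenthetical. First, the case $i^\ast+1 > n-1$ is not merely a small-$n$ nuisance: when $|R_x|$ is close to $2^n$ the loop can hit the forced exit before your concentration argument has any slack (at $i=i^\ast$ itself the mean cell size can equal $2n^2$, so no deviation bound applies there), and one genuinely needs the padding convention that witnesses live in $\{0,1\}^n$ with $|R_x|\le 2^{n-c}$ for a suitable constant $c$ in order to push $i^\ast+1$ strictly below $n-1$. Second, the final constant: your bound is $3/4$ plus the overshoot probability, so you must verify that the chosen moment inequality (e.g.\ the Bellare--Rompel tail bound for $r$-wise independent sums) makes the overshoot at most $0.05$ for all $n$ in the claimed range; an asymptotic $o(1)$ does not by itself deliver the stated $0.8$ for moderate $n$. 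Neither issue changes the architecture of your proof, but both are places where ``a closer look at the analysis in~\cite{Bellare98uniformgeneration}'' is doing real work.
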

Since the probability of any witness $y \in R_x$ being output by a
successful run of the algorithm is independent of $y$, the {\BGP}
algorithm guarantees uniform generation of witnesses.  However, as we
argue in the next section, scaling the algorithm to even medium-sized
problem instances is quite difficult in practice. Indeed, we have
found no published report discussing any implementation of the
{\BGP} algorithm.

In 2007, Gomes et al.~\cite{Gomes-Sampling} presented two
closely related algorithms named {\XORSample} and {\XORSampleprime}
for near-uniform sampling of combinatorial spaces.  A key idea in both
these algorithms is to constrain a given instance $F$ of the CNF SAT
problem by a set of randomly selected xor constraints over the
variables appearing in $F$.  An xor constraint over a set $V$ of
variables is an equation of the form $e=c$, where $c\in\{0,1\}$ and
$e$ is the logical xor of a subset of $V$.  A probability distribution
$\mathbb{X}(|V|,q)$ over the set of all xor constraints over $V$ is
characterized by the probability $q$ of choosing a variable in $V$.  A
random xor constraint from $\mathbb{X}(|V|,q)$ is obtained by forming
an xor constraint where each variable in $V$ is chosen independently
with probability $q$, and $c$ is chosen uniformly at random.

We present the pseudocode of algorithm {\XORSampleprime} below.  The
algorithm uses a function {\SATModelCount} that takes a Boolean
formula $F$ and returns the exact count of witnesses of $F$. Algorithm
{\XORSampleprime} takes as inputs a CNF formula $F$, the parameter $q$
discussed above and an integer $s>0$.  Suppose the number of
variables in $F$ is $n$.  The algorithm proceeds by conjoining~$s$
xor constraints to~$F$, where the constraints are chosen randomly from
the distribution $\mathbb{X}(n, q)$.  Let $F'$ denote the conjunction
of $F$ and the random xor constraints, and let $mc$ denote the model
count (i.e., number of witnesses) of $F'$.  If $mc \ge 1$, the
algorithm enumerates the witnesses of $F'$ and chooses one witness at
random.  Otherwise, the algorithm outputs $\bot$, indicating a failed
run.
\begin{tabbing}	
xx \= xx \= xx \= xx \= xx \= \kill
\noindent {\bfseries \textsf{Algorithm} ${\XORSampleprime} (F, q, s)$}\\ 
/* $n =$ Number of variables in $F$ */\\
1:\> $Q_s \leftarrow \{s$  random xor constraints from
$\mathbb{X}(n,q)\}$;\\
2:\> $F' = F \wedge (\bigwedge_{f \in Q_s} f)$;\\
3:\> $mc \leftarrow \SATModelCount(F')$;\\
4:\> {\bfseries if} $(mc \ge 1)$ \\
5:\> \> Choose $i$ at random from $\{1, \ldots mc\}$;\\
6:\> \> List the first $i$ witnesses of $F'$;\\
7:\> \> {\bfseries return} $i^{th}$ witness of $F'$;\\
8:\> {\bfseries else} {\bfseries return} $\bot$;
\end{tabbing} 
Algorithm {\XORSample} can be viewed as a variant of algorithm
{\XORSampleprime} in which we check if $mc$ is exactly $1$ (instead of
$mc \ge 1$) in line 4 of the pseudocode.  An additional difference is
that if the check in line 4 fails, algorithm {\XORSample} starts
afresh from line 1 by randomly choosing $s$ xor constraints. In our
experiments, we observed that {\XORSampleprime} significantly
outperforms {\XORSample} in performance, hence we consider only
{\XORSampleprime} for comparison with our algorithm.  The following
theorem is proved in ~\cite{Gomes-Sampling}

\begin{theorem}\label{theorem:gss-bounds}
Let $F$ be a Boolean formula with $2^{s^*}$ solutions.  Let $\alpha$
be such that $0 < \alpha < s^*$ and $s = s^* - \alpha$.  For a witness
$y$ of $F$, the probability with which {\XORSampleprime} with
parameters $q = \frac{1}{2}$ and $s$ outputs $y$ is bounded below by
$c'(\alpha) 2^{-s^*}$, where $c'(\alpha) = \frac{1-2^{-\alpha/3}}{(1 +
  2^{-\alpha})(1 + 2^{-\alpha/3})}$.  Further, {\XORSampleprime}
succeeds with probability larger than $c'(\alpha)$.
\end{theorem}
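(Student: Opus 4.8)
The plan is to prove the per-witness lower bound first, and then read off the success-probability bound by summing over all witnesses of $F$. Fix a target witness $y_0 \in R_F$. With $q = 1/2$, each of the $s$ random xor constraints is an equation $a_i \cdot y = c_i$ over $\mathrm{GF}(2)$ with $a_i$ uniform in $\{0,1\}^n$ and $c_i$ uniform in $\{0,1\}$; stacking them defines the random affine map $h(y) = (a_1 \cdot y \oplus c_1, \ldots, a_s \cdot y \oplus c_s)$, so that the witness set $R_{F'}$ of $F'$ is exactly $\{y \in R_F : h(y) = 0\}$. The key structural observation is that this family is not only pairwise but $3$-wise independent: for any three distinct points $y, y', y''$ the vectors $y' \oplus y$ and $y'' \oplus y$ are nonzero and distinct, hence linearly independent over $\mathrm{GF}(2)$, and one then checks that $(h(y), h(y'), h(y''))$ is uniform over $\{0,1\}^{3s}$.

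Next I would condition on the event $h(y_0) = 0$, which has probability $2^{-s}$ and is necessary for $y_0$ to be returned. Given this event, $\XORSampleprime$ returns $y_0$ iff the index drawn uniformly in line~5 points to $y_0$, which happens with probability $1/|R_{F'}|$; moreover $|R_{F'}| = 1 + Y$, where $Y = \sum_{y \in R_F \setminus \{y_0\}} \mathbf{1}[h(y) = 0]$. Hence $\prob[\XORSampleprime = y_0] = 2^{-s} \cdot \expect[1/(1+Y) \mid h(y_0) = 0]$. By $3$-wise independence, conditioning on $h(y_0) = 0$ leaves each summand of $Y$ Bernoulli with parameter $2^{-s}$ and the summands pairwise independent, so with $s = s^* - \alpha$ we get $\mu := \expect[Y \mid h(y_0) = 0] = (2^{s^*}-1)2^{-s} \le 2^\alpha$ and $\var[Y \mid h(y_0) = 0] \le \mu \le 2^\alpha$. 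Chebyshev's inequality with deviation $2^{2\alpha/3}$ then gives $\prob[Y > \mu + 2^{2\alpha/3} \mid h(y_0) = 0] \le 2^\alpha/2^{4\alpha/3} = 2^{-\alpha/3}$, i.e.\ $Y \le 2^\alpha + 2^{2\alpha/3}$ with conditional probability at least $1 - 2^{-\alpha/3}$.

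To conclude, I would use the elementary bound $\expect[1/(1+Y) \mid h(y_0) = 0] \ge \prob[Y \le T \mid h(y_0) = 0]/(1+T)$ with $T = 2^\alpha + 2^{2\alpha/3}$, obtaining
\[
\prob[\XORSampleprime = y_0] \;\ge\; 2^{-s} \cdot \frac{1 - 2^{-\alpha/3}}{1 + 2^\alpha + 2^{2\alpha/3}}.
\]
Now $(1 + 2^\alpha)(1 + 2^{-\alpha/3}) = 1 + 2^\alpha + 2^{2\alpha/3} + 2^{-\alpha/3} > 1 + 2^\alpha + 2^{2\alpha/3}$, and $2^{-s} = 2^\alpha \cdot 2^{-s^*}$ with $2^\alpha/(1 + 2^\alpha) = 1/(1 + 2^{-\alpha})$, so the right-hand side is at least $c'(\alpha)\,2^{-s^*}$, as claimed. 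The success-probability bound is then immediate: a successful run of $\XORSampleprime$ outputs a witness of $F$, and the events $\{\XORSampleprime = y_0\}$ for distinct $y_0 \in R_F$ are disjoint, so $\prob[\text{success}] = \sum_{y_0 \in R_F} \prob[\XORSampleprime = y_0] \ge 2^{s^*} \cdot c'(\alpha)\,2^{-s^*} = c'(\alpha)$.

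The step that needs the most care is the variance estimate: it depends on conditioning on $h(y_0) = 0$ not destroying pairwise independence of the remaining indicators, and this is exactly where $3$-wise independence of the $q = 1/2$ xor family enters — so the one genuinely non-routine point is checking that $\mathbb{X}(n, 1/2)$ does induce a $3$-wise independent hash family. The rest — noticing that only an upper tail bound on $|R_{F'}|$ is required, choosing the Chebyshev deviation $2^{2\alpha/3}$ so that the constants collapse to exactly $c'(\alpha)$, and the final algebraic simplification — is routine.
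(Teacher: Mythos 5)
The paper does not actually prove this statement: Theorem~\ref{theorem:gss-bounds} is imported verbatim from~\cite{Gomes-Sampling} with the remark ``the following theorem is proved in~\cite{Gomes-Sampling}'', so there is no in-paper proof to compare against. Judged on its own, your argument is correct and is essentially the second-moment analysis used in that reference. The points that needed checking all go through: with $q=\tfrac12$ the affine family $h(y)=(a_1\cdot y\oplus c_1,\ldots,a_s\cdot y\oplus c_s)$ is indeed $3$-wise independent (for distinct $y,y',y''$ the differences $y\oplus y'$ and $y\oplus y''$ are nonzero and distinct, hence linearly independent over $\mathrm{GF}(2)$), so conditioning on $h(y_0)=0$ leaves the remaining indicators Bernoulli$(2^{-s})$ and pairwise uncorrelated; the conditional mean and variance of $Y$ are both at most $2^{\alpha}$; Chebyshev with deviation $2^{2\alpha/3}$ gives $Y\le 2^{\alpha}+2^{2\alpha/3}$ with conditional probability at least $1-2^{-\alpha/3}$; and the identity $2^{\alpha}(1+2^{-\alpha})(1+2^{-\alpha/3})=1+2^{\alpha}+2^{2\alpha/3}+2^{-\alpha/3}>1+2^{\alpha}+2^{2\alpha/3}$ collapses the resulting bound to exactly $c'(\alpha)2^{-s^*}$. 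The success-probability claim then follows by summing the disjoint events $\{\XORSampleprime=y_0\}$ over the $2^{s^*}$ witnesses, as you say. Two cosmetic remarks only: you should note explicitly (as you implicitly use) that conditioning on $y_0\in R_{F'}$ makes $|R_{F'}|\ge 1$, which is why only an upper tail bound is needed; and the theorem's strict inequality ``larger than $c'(\alpha)$'' is delivered by the strict inequality in the final algebraic step, so nothing is lost there.
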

While the choice of $q = \frac{1}{2}$ allowed the authors
of~\cite{Gomes-Sampling} to prove Theorem~\ref{theorem:gss-bounds},
the authors acknowledge that finding witnesses of $F'$ is quite hard
in practice when random xor constraints are chosen from $\mathbb{X}(n,
\frac{1}{2})$.  Therefore, they advocate using values of $q$ much
smaller than $\frac{1}{2}$.  Unfortunately, the analysis that yields
the theoretical guarantees in Theorem~\ref{theorem:gss-bounds} does
not hold with these smaller values of $q$.  This illustrates the
conflict between witness generators with good performance in practice,
and those with good theoretical guarantees.
\vspace*{-0.15in}
\section{The UniWit Algorithm: Design and Analysis}\label{sec:our-algorithm}
We now describe an adaptation, called {\UniformWitness}, of the {\BGP} algorithm that scales to
much larger problem sizes than those that can be handled by the {\BGP}
algorithm, while weakening the guarantee of uniform generation to that
of near-uniform generation.  Experimental results, however, indicate
that the witnesses generated by our algorithm are fairly uniform in
practice.  Our algorithm can also be viewed as an adaptation of the
{\XORSampleprime} algorithm, in which we do not need to provide
hard-to-estimate problem-specific parameters like $s$ and $q$.  


We begin with some observations about the {\BGP} algorithm. In what
follows, line numbers refer to those in the pseudocode of the {\BGP}
algorithm presented in Section~\ref{sec:bellare-algorithm}.  Our first
observation is that the loop in lines 7--10 of the pseudocode iterates
until either $|R_{x,h,\alpha}| \leq 2n^2$ for \emph{every} $\alpha \in
\{0, 1\}^{i-l}$ or $i$ increments to $n-1$.  Checking the first
condition is computationally prohibitive even for values of $i-l$ and $n$
as small as few tens.
So we ask if this condition can be simplified, perhaps with some
weakening of theoretical guarantees.  Indeed, we have found that if
the condition requires that $1 \le |R_{x, h, \alpha}| \le 2n^2$ for
a \emph{specific} $\alpha \in \{0,1\}^{i-l}$ (instead of for
\emph{every} $\alpha \in \{0, 1\}^{i-l}$), we can still guarantee
near-uniformity (but not uniformity) of the generated witnesses. This
suggests choosing both a random $h \in H(n, i-l, n)$ \emph{and} a
random $\alpha \in \{0,1\}^{i-l}$ \emph{within} the loop of lines
7--10.

The analysis presented in~\cite{Bellare98uniformgeneration} relies on
$h$ being sampled uniformly from a family of $n$-wise independent hash
functions.  In the context of generating SAT witnesses, $n$ denotes
the number of propositional variables in the input formula.  This can
be large (several thousands) in problems arising from directed random
testing.  Unfortunately, implementing $n$-wise independent hash
functions using algebraic hash functions (as advocated
in~\cite{Bellare98uniformgeneration}) for large values of $n$ is
computationally infeasible in practice.  This prompts us to ask if the
{\BGP} algorithm can be adapted to work with $r$-wise independent hash
functions for small values of $r$, and if simpler families of hash
functions can be used.  Indeed, we have found that with $r \ge 2$, an
adapted version of the {\BGP} algorithm can be made to generate
near-uniform witnesses.  We can also bound the probability of failure
of the adapted algorithm by a constant.  Significantly, the
sufficiency of pairwise independence allows us to use computationally
efficient xor-based families of hash functions, like $H_{conv}(n, m,
2)$ discussed in Section~\ref{sec:prelims-for-bellare}.  This
provides a significant scaling advantage to our algorithm vis-a-vis
the {\BGP} algorithm in practice.

In the context of uniform generation of SAT witnesses, checking if
$|R_x| \leq 2n^2$ (line 2 of pseudocode) or if $|R_{x,h,\alpha}| \leq
2n^2$ (line 10 of pseudocode, modified as suggested above) can be done
either by approximate model-counting or by repeated invokations of a
SAT solver.  State-of-the-art approximate model counting
techniques~\cite{Gomes06modelcounting} rely on randomly sampling the
witness space, suggesting a circular dependency.  Hence, we choose to
use a SAT solver as the back-end engine for enumerating and counting
witnesses.  Note that if $h$ is chosen randomly from $H_{conv}(n, m,
2)$, the formula for which we seek witnesses is the conjunction of the
original (CNF) formula and xor constraints encoding the inclusion of
each witness in $h^{-1}(\alpha)$.  We therefore choose to use a SAT
solver optimized for conjunctions of xor constraints and CNF clauses
as the back-end engine; specifically, we use CryptoMiniSAT (version
2.9.2)~\cite{CryptoMiniSAT}.


Modern SAT solvers often produce partial assignments that specify
values of a subset of variables, such that every assignment of values
to the remaining variables gives a witness.  Since we must find large
numbers ($2n^2 \approx 2\times 10^6$ if $n \approx 1000$) of
witnesses, it would be useful to obtain partial assignments from the
SAT solver.  Unfortunately, conjoining random xor constraints to the
original formula reduces the likelihood that large sets of witnesses
can be encoded as partial assignments.  Thus, each invokation of the
SAT solver is likely to generate only a few witnesses, necessitating a
large number of calls to the solver.  To make matters worse, if the
count of witnesses exceeds $2n^2$ and if $i < n-1$, the check in line
10 of the pseudocode of algorithm {\BGP} (modified as suggested above)
fails, and the loop of lines 7--10 iterates once more, requiring
generation of up to $2n^2$ witnesses of a modified SAT problem all
over again.  This can be computationally prohibitive in practice.
Indeed, our implementation of the {\BGP} algorithm with CryptoMiniSAT
failed to terminate on formulas with few tens of variables, even
when running on high-performance computers for $20$ hours.
This prompts us to ask if the required number of witnesses, or
\emph{pivot}, in the {\BGP} algorithm (see line 1 of the pseudocode)
can be reduced.  We answer this question in the affirmative, and
show that the pivot can indeed be reduced to $2n^{1/k}$, where 
$k$ is an integer $\ge 1$.  Note that if
$k=3$ and $n = 1000$, the value of $2n^{1/k}$ is only
$20$, while $2n^2$ equals $2\times10^6$. This
translates to a significant leap in the sizes of problems for which we
can generate random witnesses.  There are, however, some practical tradeoffs
involved in the choice of $k$; we defer a discussion of these to a
later part of this section.
 
We now present the {\UniformWitness} algorithm, which implements the
modifications to the {\BGP} algorithm suggested above.
{\UniformWitness} takes as inputs a CNF formula~$F$ with
$n$~variables, and an integer $k \ge 1$.  
The algorithm either outputs a witness that is near-uniformly
distributed over the space of all witnesses of $F$ or produces a 
symbol $\bot$ indicating a failed
run.  We also assume that we have access to a function {\BoundedSAT}
that takes as inputs a propositional formula $F$ that is a conjunction of a
CNF formula and xor constraints, and an integer $r \ge 0$ and returns
a set $S$ of witnesses of $F$ such that $|S| = \min(r, \#F)$, where
$\#F$ denotes the count of all witnesses of $F$.
\begin{tabbing}	
xx \= xx \= xx \= xx \= xx \= \kill
\noindent {\bfseries \textsf{Algorithm} ${\UniformWitness} (F, k)$:}\\ 
/* Assume $z_1, \ldots z_n$ are variables in $F$    */ \\
/* Choose a priori the family of hash functions $H(n, m, r), r \ge 2$ to be used  */\\\\
1:\> $\mathrm{pivot} \leftarrow \lceil 2n^{1/k} \rceil$; $S \leftarrow \BoundedSAT(F, \mathrm{pivot} +1)$;\\
2:\> {\bfseries if} ($|S| \le \mathrm{pivot}$)\\
3:\> \> Let $y_1, \ldots y_{|S|}$ be the elements of $S$;\\ 
4:\> \> Choose $j$ at random from $\{1, \ldots |S|\}$ and {\bfseries return} $y_j$;\\
5:\> {\bfseries else} \\
6:\> \> $l \leftarrow \lfloor \frac{1}{k}\cdot(\log_2 n)\rfloor$; $i \leftarrow l-1$;\\
7:\> \> {\bfseries repeat}\\
8:\> \> \> $i \leftarrow i+1$;\\
9:\> \> \> Choose $h$ at random from $H(n, i-l, r)$; \\
10:\> \> \> Choose $\alpha$ at random from $\{0, 1\}^{i-l}$;\\
11:\> \> \> $S \leftarrow \BoundedSAT(F \wedge (h(z_1, \ldots z_n) = \alpha), \mathrm{pivot}+1)$;\\
12:\> \> {\bfseries until} ($1 \le |S| \le \mathrm{pivot}$) or ($i = n$);\\
13:\> \> {\bfseries if} ($|S| > \mathrm{pivot}$) or ($|S| < 1$) {\bfseries return} $\bot$;\\
14:\> \> {\bfseries else}\\
15:\> \> \> Let $y_1, \ldots y_{|S|}$ be the elements of $S$;\\
16:\> \> \> Choose $j$ at random from $\{1, \ldots \mathrm{pivot}\}$;. \\
17:\> \> \> {\bfseries if} $j \le |S|$, {\bfseries return} $y_j$;\\
18:\> \> \> {\bfseries else} {\bfseries return} $\bot$;\\
\end{tabbing}

\noindent {\bfseries Implementation issues:} 
There are four steps in {\UniformWitness} (lines 4, 9, 10 and 16 of
the pseudocode) where random choices are made.  In our implementation,
in line 10 of the pseudocode, we choose a random hash function from
the family $H_{conv}(n,i-l,2)$, since it is computationally efficient
to do so.
Recall from Section~\ref{sec:prelims-for-bellare} that choosing a
random hash function from $H_{conv}(n, m, 2)$ requires choosing two
random bit-vectors.  It is straightforward to implement these choices
and also the choice of a random $\alpha \in \{0, 1\}^{i-l}$ in line 10
of the pseudocode, if we have access to a source of independent and
uniformly distributed random bits.  In lines 4 and 16, we must choose
a random integer from a specified range.  By using standard techniques
(see, for example, the discussion on coin tossing
in~\cite{Bellare98uniformgeneration}), this can also be implemented
efficiently if we have access to a source of random bits.  Since
accessing truly random bits is a practical impossibility, our
implementation uses pseudorandom sequences of bits generated from
nuclear decay processes and available at HotBits~\cite{HotBits}.  We
download and store a sufficiently long sequence of random bits in a
file, and access an appropriate number of bits sequentially whenever
needed.



In line 11 of the pseudocode for {\UniformWitness}, we invoke
{\BoundedSAT} with arguments $F\wedge (h(z_1, \ldots z_n) = \alpha)$
and $\mathrm{pivot}+1$.  The function {\BoundedSAT} is implemented
using CryptoMiniSAT (version 2.9.2), which allows passing a parameter indicating the
maximum number of witnesses to be generated.  The sub-formula $(h(z_1,
\ldots z_n) = \alpha)$ is constructed as follows.  As mentioned in
Section~\ref{sec:prelims-for-bellare}, a random hash function from the
family $H_{conv}(n, i-l, 2)$ can be implemented by choosing a random
$a \in \{0, 1\}^{n+i-l-1}$ and a random $b \in \{0, 1\}^{i-l}$.  
Recalling the definition of $h$ from
Section~\ref{sec:prelims-for-bellare}, the sub-formula $(h(z_1, \ldots
z_n) = \alpha)$ is given by
$\bigwedge_{j=1}^{i-l}\left(\left(\bigoplus_{p=1}^{n}(z_p \wedge
a[j+p-1]) \oplus b[j]\right) \Leftrightarrow \alpha[j]\right)$.
\noindent {\bfseries Analysis of {\UniformWitness}:}
Let $R_F$ denote the set of witnesses of the input formula $F$.  Using
notation discussed in Section~\ref{sec:prelims-for-bellare}, suppose
$R_F \subseteq \{0, 1\}^n$.  For simplicity of exposition, we assume
that $\log_2|R_F|-(1/k)\cdot\log_2n$ is an integer in the following
discussion.  A more careful analysis removes this assumption with
constant factor reductions in the probability of generation of an
arbitrary witness and in the probability of failure of
{\UniformWitness}.

\begin{theorem}\label{theorem:uniform-gen}
Suppose $F$ has $n$ variables and $n > 2^k$.  For every witness $y$
of $F$, the conditional probability that algorithm {\UniformWitness}
outputs $y$ on inputs $F$ and $k$, given that the algorithm succeeds,
is bounded below by $\frac{1}{8|R_F|}$.
\end{theorem}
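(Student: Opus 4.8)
The plan is to split on the size of the witness set $R_F$ of $F$ and, in the interesting case, to lower-bound the chance that {\UniformWitness} reaches a well-calibrated iteration of its main loop and there produces a $\mathrm{pivot}$-bounded cell containing $y$. If $|R_F| \le \mathrm{pivot}$ the test in line~2 holds, the run never fails, and each $y \in R_F$ is returned with probability exactly $1/|R_F| \ge 1/(8|R_F|)$; so assume $|R_F| > \mathrm{pivot}$ and the loop of lines~7--12 runs. Write $m = \log_2|R_F| - (1/k)\log_2 n$, an integer (by the stated simplifying assumption) with $m \ge 2$ since $|R_F| > \mathrm{pivot} \ge 2n^{1/k}$; note also $2n^{1/k} \le \mathrm{pivot} \le 3n^{1/k}$ because $n > 2^k$ forces $n^{1/k} > 2$. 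For a loop iteration using $h \xleftarrow{R} H(n, i-l, r)$ put $j = i-l$; by the fact recalled from~\cite{Bellare98uniformgeneration}, $\expect[|R_{F,h,\alpha}|] = |R_F|/2^{j}$, so this expectation lies essentially in $[1,\mathrm{pivot}]$ exactly when $j$ is in the ``critical band'' $\{m-1, m, \ldots, m+l\}$. Fix a witness $y \in R_F$.

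Algorithm {\UniformWitness} outputs $y$ precisely when the loop halts at some iteration with $1 \le |R_{F,h,\alpha}| \le \mathrm{pivot}$ and $y \in R_{F,h,\alpha}$, \emph{and} the integer chosen in line~16 hits the position of $y$ in the enumeration of $R_{F,h,\alpha}$; since that integer is uniform in $\{1,\ldots,\mathrm{pivot}\}$ and $|R_{F,h,\alpha}| \le \mathrm{pivot}$, the last event has probability $1/\mathrm{pivot}$, independent of everything else. Because distinct iterations use fresh random bits, I would write
\[
\prob[\,\text{output }y\,] \;=\; \frac{1}{\mathrm{pivot}}\sum_{j}\, p_j\, g_j ,
\]
with $p_j = \prob[\text{loop not halted before iteration }j]$ and $g_j = \prob[\,y \in R_{F,h,\alpha} \wedge |R_{F,h,\alpha}| \le \mathrm{pivot}\,]$ at iteration $j$, and then lower-bound the sum by its contributions from the critical band.

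Two estimates drive the argument. \emph{(i)~Per-iteration event.} For a band iteration $j \ge m$: since $\alpha$ is uniform and independent of $h$, $\prob[h(y)=\alpha] = 2^{-j}$; conditioned on $h(y)=\alpha$, $|R_{F,h,\alpha}|$ equals $1$ plus the number of \emph{other} witnesses colliding with $y$, of conditional expectation $(|R_F|-1)/2^{j} \le |R_F|/2^{m} = n^{1/k} \le \mathrm{pivot}/2$ by pairwise independence of $H(n,j,r)$ (this is where $r \ge 2$ enters). Markov's inequality then gives $\prob[\,|R_{F,h,\alpha}| \le \mathrm{pivot}\mid h(y)=\alpha\,] \ge 1/2$, so $g_j \ge 2^{-j-1}$ for $j \ge m$; a size-bias identity yields the analogous ratio estimate $g_j/\prob[1\le|R_{F,h,\alpha}|\le\mathrm{pivot}] \approx \expect[\,|R_{F,h,\alpha}|\mid \text{cell good}\,]/|R_F|$ needed to handle $j = m-1$. \emph{(ii)~Reaching the band.} For $j \le m-2$, $\expect[|R_{F,h,\alpha}|] = |R_F|/2^{j} \ge 4n^{1/k} \ge \tfrac{4}{3}\mathrm{pivot}$, while pairwise independence gives $\var[|R_{F,h,\alpha}|] \le \expect[|R_{F,h,\alpha}|]$; Chebyshev then bounds the probability that such a cell has size $\le\mathrm{pivot}$ (hence that the loop could halt at $j$) by $O(2^{j}/|R_F|)$, and summing this geometric series over $j \le m-2$ shows the loop reaches iteration $m-1$ with probability $1 - O(1/n^{1/k})$, close to $1$.

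Finally I would assemble: the loop halts at iteration $m-1$ or $m$ with probability close to $1$, and one shows that the combined contribution $\frac{1}{\mathrm{pivot}}(p_{m-1}g_{m-1} + p_m g_m)$ --- rewritten via $p_m = p_{m-1}(1 - \prob[\text{halt at }m-1])$ --- is at least a constant multiple of $\frac{2^{-m}}{\mathrm{pivot}} = \frac{n^{1/k}}{|R_F|\,\mathrm{pivot}} \ge \frac{1}{3|R_F|}$, tracking the $\le 2$ factors from Markov, from ``reaching the band'', and from $\mathrm{pivot} \le 3n^{1/k}$ --- the hypothesis $n > 2^k$ being what keeps $\mathrm{pivot}$ (and $l$) large enough to absorb the Chebyshev tails --- which yields $\prob[\text{output }y] \ge 1/(8|R_F|)$. \emph{The main obstacle is exactly this last joint control.} The loop is tuned to halt around iteration $m$, but with constant probability it halts one step early at $m-1$, precisely where the Markov estimate on $|R_{F,h,\alpha}|$ degenerates; neither case may be dropped, and iteration $m$ cannot be analyzed alone since its reachability probability can be driven near $0$. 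The crux is to prove that the iteration-$(m-1)$ and iteration-$m$ contributions always \emph{add up} to $\Omega(2^{-m})$ --- a trade-off between ``$y$ already lands in a bounded cell at $m-1$'' and ``the $m-1$ cell overflows $\mathrm{pivot}$, so we survive to $m$'' --- which is what forces the second-moment comparison between $\prob[|R_{F,h,\alpha}| > \mathrm{pivot}]$ and the truncated mass $\expect[\,|R_{F,h,\alpha}|\,\mathbb{1}[|R_{F,h,\alpha}|\le\mathrm{pivot}]\,]$ at iteration $m-1$.
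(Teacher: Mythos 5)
Your trivial case and your per-iteration estimate at the balanced level are sound, and the latter is in fact essentially the paper's \emph{entire} argument: the paper's proof fixes the single iteration $i=m+l$ (your $j=m$), uses $\prob\left[Y\right]\ge p_{m+l,y}\cdot(1/2n^{1/k})$, and lower-bounds $p_{m+l,y}$ by the per-iteration quantity $q_{m+l,y,\alpha}=\prob\left[|R_{F,h,\alpha}|\le 2n^{1/k} \mbox{ and } h(y)=\alpha\right]$, shown to be at least $(1-n^{-1/k})/2^{m+1}$ via pairwise independence (conditional mean $2^{-m}(|R_F|-1)+1\le n^{1/k}+1$) and Markov with $\kappa=2/(1+n^{-1/k})$; this yields $(1-n^{-1/k})/(4|R_F|)\ge 1/(8|R_F|)$ once $n>2^k$. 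The paper never introduces your reach-probabilities $p_j$ nor analyzes levels below $m$. (Note also that your Markov step at $j\ge m$ really gives $(1-n^{-1/k})/2$ rather than $1/2$; that slack matters when the final constant is tallied, though it is absorbable.)

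The genuine gap is that everything you add beyond this common core --- which you yourself identify as the crux --- is not proved, and the quantitative claims meant to support it fail under the stated hypothesis. First, the key inequality that the iteration-$(m-1)$ and iteration-$m$ contributions jointly amount to $\Omega(2^{-m})$ is only announced (``one shows that\ldots''). This is exactly the hard point: the cell containing $y$ is size-biased, so a sizeable probability of halting at level $m-1$ does not by itself give a comparable probability that $y$ lies in a pivot-bounded cell there, and it is unclear that the sketched comparison of $\prob\left[|R_{F,h,\alpha}|>\mathrm{pivot}\right]$ with the truncated first moment can be carried out using only pairwise independence. Second, the Chebyshev step does not deliver ``reaches iteration $m-1$ with probability $1-O(1/n^{1/k})$, close to $1$'' under $n>2^k$: the tail at level $j$ is of order $\mu_j/(\mu_j-\mathrm{pivot})^2$ with $\mu_j=2^{m-j}n^{1/k}$, and summing over $j\le m-2$ gives a bound on the order of $8/n^{1/k}$, which exceeds $1$ when $n^{1/k}$ is barely above $2$ (pivot can be as small as $5$); hence neither $p_{m-1}$ nor $p_m$ is bounded below by your argument, and the final accounting down to $1/(8|R_F|)$ is not carried out. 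Your instinct that early termination must be confronted is a legitimate criticism of the paper's own one-iteration proof, which silently identifies $p_{m+l,y}$ with the per-iteration probability; but as a proof attempt, your proposal leaves its central inequality unestablished, so it does not yet prove the theorem.
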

\begin{proof} Referring to the pseudocode of {\UniformWitness}, if
$|R_F| \le 2n^{1/k}$, the theorem holds trivially.  Suppose $|R_F| >
2n^{1/k}$, and let $Y$ denote the event that witness $y$ in $R_F$ is
output by {\UniformWitness} on inputs $F$ and $k$.  Let $p_{i,y}$
denote the probability that the loop in lines 7--12 of the pseudocode
terminates in iteration $i$ with $y$ in $R_{F,h,\alpha}$, where
$\alpha \in \{0, 1\}^{i-l}$ is the value chosen in line 10.  It
follows from the pseudocode that $\prob\left[Y\right] \geq
p_{i,y}\cdot(1/2n^{1/k})$, for every $i \in \{l, \ldots n\}$.  Let us
denote $\log_2|R_F|-(1/k)\cdot\log_2n$ by $m$.  Therefore, $2^m\cdot
n^{1/k} = |R_F|$.  Since $2n^{1/k} < |R_F| \leq 2^n$ and since $l
= \lfloor(1/k)\cdot\log_2 n\rfloor$ (see line 6 of pseudocode), we have $l < m+l \leq n$.
Consequently, $\prob\left[Y\right] \geq p_{m+l,y}\cdot(1/2n^{1/k})$.
The proof is completed by showing that $p_{m+l,y} \geq \frac{1 -
n^{-1/k}}{2^{m+1}}$.  This gives
$\prob\left[Y\right] \geq \frac{1-n^{-1/k}}{2^{m+2}\cdot n^{1/k}}$ $=$
$\frac{1-n^{-1/k}}{4|R_F|}$ $\ge \frac{1}{8|R_F|}$, if $n > 2^k$.

To calculate $p_{m+l,y}$, we first note that since $y \in R_F$, the
requirement ``$y \in R_{F,h,\alpha}$'' reduces to ``$y \in h^{-1}(\alpha)$''.
For $\alpha \in \{0,1\}^m$ and $y \in \{0,1\}^n$, we define
$q_{m+l,y,\alpha}$ as $\prob\left[|R_{F,h,\alpha}| \leq
2n^{1/k} \mbox{ and } h(y) = \alpha : h \xleftarrow{R} H(n, m,
r)\right]$, where $r \ge 2$.  The proof is now completed by showing
that $q_{m+l,y,\alpha} \geq (1 - n^{-1/k})/2^{m+1}$ for every
$\alpha \in \{0,1\}^m$ and $y \in \{0,1\}^n$.  Towards this end, we
define an indicator variable $\gamma_{y, \alpha}$ for every
$y \in \{0,1\}^n$ and $\alpha
\in \{0,1\}^m$ as follows: $\gamma_{y,\alpha} = 1$ if $h(y) = \alpha$
and $\gamma_{y,\alpha} = 0$ otherwise.  Thus, $\gamma_{y, \alpha}$ is
a random variable with probability distribution induced by that of
$h$.  It is easy to show that (i)
$\expect\left[\gamma_{y,\alpha}\right] = 2^{-m}$, and (ii) the
pairwise independence of $h$ implies pairwise independence of the
$\gamma_{y, \alpha}$ variables.  We now define $\Gamma_\alpha
= \sum_{z \in R_F} \gamma_{z, \alpha}$ and $\mu_{y, \alpha} =
\expect\left[\Gamma_\alpha \mid \right.$ $\left.  \gamma_{y, \alpha} =
  1\right]$.  Clearly, $\Gamma_\alpha = |R_{F, h, \alpha}|$ and
$\mu_{y, \alpha} = \expect\left[\sum_{z \in R_F}\gamma_{z, \alpha}
  \mid \gamma_{y,\alpha} = 1 \right]$ $= \sum_{z \in R_F}
\expect\left[\gamma_{z, \alpha} \mid \gamma_{y, \alpha} = 1\right]$.
Using pairwise independence of the $\gamma_{y, \alpha}$ variables, the
above simplifies to $\mu_{y,\alpha} = 2^{-m}(|R_F| - 1) + 1$ $\le
2^{-m}|R_F|+1 = n^{1/k}+1$.  From Markov's inequality, we know that
$\prob \left[\,\Gamma_\alpha \le \kappa\cdot\mu_{y,\alpha} \right.$
$\left.  \mid \gamma_{y,\alpha} = 1 \right]$ $\ge 1 - 1/\kappa$ for
$\kappa > 0$.  With $\kappa = \frac{2}{1+n^{-1/k}}$, this gives
$\prob\left[\,|R_{F, h, \alpha}| \right.$ $\left. \le
2n^{1/k} \right.$ $\left. \mid \gamma_{y,\alpha} = 1\right]$ $\ge
(1-n^{-1/k})/2$.  Since $h$ is chosen at random from $H(n, m, r)$, we
also have $\prob\left[h(y) = \alpha\right] = 1/2^m$.  It follows that
$q_{m+l,y,\alpha} \ge (1-n^{-1/k})/2^{m+1}$. \qed
\end{proof}      

	
\noindent

\begin{theorem}\label{theorem:fail-prob}
Assuming $n > 2^k$, algorithm {\UniformWitness} succeeds (i.e. does
not return $\bot$) with probability at least $\frac{1}{8}$.
\end{theorem}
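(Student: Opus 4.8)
The plan is to obtain the failure bound as an immediate consequence of the \emph{unconditional} per-witness estimate already established inside the proof of Theorem~\ref{theorem:uniform-gen}. The first step is an elementary structural observation: on any input, {\UniformWitness} either returns a single witness $y \in R_F$ or returns the symbol $\bot$, and a single run produces a single output. Hence the event that a run succeeds is the disjoint union, over all $y \in R_F$, of the events $Y_y$ = ``the run outputs $y$'', so that $\prob\left[\text{success}\right] = \sum_{y \in R_F} \prob\left[Y_y\right]$.

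The second step is to observe that $\prob\left[Y_y\right]$ is precisely the quantity denoted $\prob[Y]$ and bounded from below in the proof of Theorem~\ref{theorem:uniform-gen}, and that that bound is unconditional: nowhere in the chain $\prob[Y] \ge p_{m+l,y}\cdot(1/2n^{1/k})$, then $p_{m+l,y} \ge (1-n^{-1/k})/2^{m+1}$, then $\prob[Y] \ge \frac{1-n^{-1/k}}{4|R_F|} \ge \frac{1}{8|R_F|}$ (the last step using $n > 2^k$), is there any conditioning on success. In the easy regime $|R_F| \le \lceil 2n^{1/k}\rceil$ the bound is even more direct: lines 2--4 of the pseudocode return a uniformly random element of $R_F$, so $\prob\left[Y_y\right] = 1/|R_F| \ge 1/(8|R_F|)$, and in fact the run never fails at all.

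Combining the two steps, I would sum the uniform lower bound $\prob\left[Y_y\right] \ge \frac{1}{8|R_F|}$ over the $|R_F|$ witnesses, obtaining $\prob\left[\text{success}\right] = \sum_{y \in R_F}\prob\left[Y_y\right] \ge |R_F|\cdot\frac{1}{8|R_F|} = \frac{1}{8}$, which is exactly Theorem~\ref{theorem:fail-prob}.

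I do not expect a genuine obstacle, since all the probabilistic work --- the pairwise-independence second-moment computation of $\mu_{y,\alpha}$ and the Markov-inequality bound on the cell size $|R_{F,h,\alpha}|$ --- has already been done for Theorem~\ref{theorem:uniform-gen}. The only points that need a little care are: confirming that the estimates $p_{i,y}$ and $\prob[Y]$ in that proof are genuinely unconditional probabilities (so that summing them over $y$ is legitimate and yields a lower bound on $\prob[\text{success}]$ rather than merely on $1$); treating the trivial case $|R_F|\le\lceil 2n^{1/k}\rceil$ separately; and carrying along the constant-factor loss incurred when $\log_2|R_F| - (1/k)\log_2 n$ is not an integer, as flagged in the discussion preceding Theorem~\ref{theorem:uniform-gen}.
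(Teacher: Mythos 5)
Your proposal is correct and is essentially the paper's own proof: the paper likewise writes $P_{succ} = \sum_{y \in R_F} \prob\left[Y\right]$ and sums the per-witness lower bound $\frac{1}{8|R_F|}$ to get $\frac{1}{8}$. Your extra care in noting that the bound used must be the \emph{unconditional} estimate on $\prob\left[Y\right]$ established inside the proof of Theorem~\ref{theorem:uniform-gen} (rather than the conditional probability as literally stated in that theorem) is well placed, since summing conditional probabilities over $y$ would only yield $1$; the paper glosses over this by citing the theorem directly.
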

\begin{proof} Let $P_{succ}$ denote the probability that a run of
algorithm {\UniformWitness} succeeds.  By definition,
$P_{succ} = \sum_{y \in R_F} \prob\left[Y\right]$. Using Theorem
\ref{theorem:uniform-gen}, $P_{succ} \ge \sum_{y \in R_F} \frac{1}{8|R_F|}$ 
$= \frac{1}{8}$.\qed
\end{proof}
\noindent

One might be tempted to use large values of the parameter $k$ to keep
the value of $pivot$ low.  However, there are tradeoffs involved in
the choice of $k$.
As $k$ increases, the pivot $2n^{1/k} $ reduces, and the chances that 
{\BoundedSAT} finds more than $2n^{1/k} $ witnesses increases, 
necessitating further iterations of the loop in lines 7--12 of the pseudocode.
Of course, reducing the 
pivot also means that {\BoundedSAT} has to find fewer witnesses, and each 
invokation of
{\BoundedSAT} is likely to take less time.  However, the increase in the
number of invokations of {\BoundedSAT} contributes to
increased overall time.  In our experiments, we have found that
choosing $k$ to be either $2$ or $3$ works well for all our benchmarks
(including those containing several thousand variables).  

\noindent {\bfseries A heuristic optimization:} A (near-)uniform generator
is likely to be invoked a large number of times for the same formula
$F$ when generating a set of witnesses of $F$.  If the performance of
the generator is sensitive to problem-specific parameter(s) not known
a priori, a natural optimization is to estimate values of these
parameter(s), perhaps using computationally expensive techniques, in
the first few runs of the generator, and then re-use these estimates
in subsequent runs on the same problem instance.  Of course, this
optimization works only if the parameter(s) under consideration can be
reasonably estimated from the first few runs.  We call this heuristic
optimization ``leapfrogging''.

In the case of algorithm {\UniformWitness}, the loop in lines 7--12 of
the pseudocode starts with $i$ set to $l-1$ and iterates until either
$i$ increments to $n$, or $|R_{F,h,\alpha}|$ becomes no larger than
$2n^{1/k}$.  For each problem instance $F$, we propose to estimate a
lower bound of the value of $i$ when the loop terminates, from the
first few runs of {\UniformWitness} on $F$.  In all subsequent runs of
{\UniformWitness} on $F$, we propose to start iterating through the
loop with $i$ set to this lower bound.  We call this specific
heuristic \emph{``leapfrogging $i$''} in the context of
{\UniformWitness}.
Note that leapfrogging may also be used for the parameter
$s$ in algorithms {\XORSampleprime} and {\XORSample} (see 
pseudocode of {\XORSampleprime}).  We will discuss more about 
this in Section~\ref{sec:experiments}.
\vspace*{-0.15in}  
			 
\section{Experimental Results}\label{sec:experiments}
To evaluate the performance of {\UniformWitness}, we built a prototype
implementation and conducted an extensive set of experiments.  Since
our motivation stems primarily from functional verification, our
benchmarks were mostly derived from functional verification of
hardware designs.  Specifically, we used ``bit-blasted''
versions of word-level constraints arising from bounded model checking
of public-domain and proprietary word-level VHDL designs.  In
addition, we also used bit-blasted versions of several
SMTLib~\cite{SMTLib} benchmarks of the {\small
  ``QF\_BV/bruttomesso/} {\small simple\_processor/''} category, and
benchmarks arising from ``Type I'' representations of
ISCAS'85 circuits, as described in~\cite{Darwiche02acompiler}.


All our experiments were conducted on a high-performance computing
cluster.  Each individual experiment was run on a single node of the
cluster, and the cluster allowed multiple experiments to run in
parallel.  Every node in the cluster had two quad-core Intel Xeon
processors running at $2.83$ GHz with $4$ GB of physical memory.  We
used $3000$ seconds as the timeout interval for each invokation of
{\BoundedSAT} in {\UniformWitness}, and $20$ hours as the timeout
interval for the overall algorithm.  If an invokation of {\BoundedSAT}
in line 11 of the pseudocode timed out (after $3000$ seconds), we
repeated the iteration (lines 7--12 of the pseudocode of
{\UniformWitness}) without incrementing $i$.  If the overall algorithm
timed out (after $20$ hours), we considered the algorithm to have
failed.  We used either $2$ or $3$ for the value of the parameter $k$
(see pseudocode of {\UniformWitness}).  This corresponds to
restricting the pivot to few tens of witnesses for formulae with a few
thousand variables.  The exact values of $k$ used for a subset of the
benchmarks are indicated in Table~\ref{table:comparison}.  A full
analysis of the effect of parameter $k$ will require a
separate study.  As explained earlier, our implementation uses the
family $H_{conv}(n, m, 2)$ to select random hash functions in step 9
of the pseudocode.
             

For purposes of comparison, we also implemented and conducted
experiments with algorithms {\BGP}~\cite{Bellare98uniformgeneration},
{\XORSample} and {\XORSampleprime}~\cite{Gomes-Sampling}, using
CryptoMiniSAT as the SAT solver in all cases.  Algorithm {\BGP} timed
out without producing any witness in all but the simplest of cases
(involving less than $20$ variables).  This is primarily because
checking whether $|R_{x,h,\alpha}| \leq 2n^2$ for a given $h \in H(n,
m, n)$ and \emph{for every} $\alpha \in \{0, 1\}^m$, as required in
step $10$ of algorithm {\BGP}, is computationally prohibitive for
values of $n$ and $m$ exceeding few tens.  Hence, we do not report any
comparison with algorithm {\BGP}.  Of the algorithms {\XORSample} and
{\XORSampleprime}, algorithm {\XORSampleprime} consistently
out-performed algorithm {\XORSample} in terms of both actual time
taken and uniformity of generated witnesses.  This can be largely
attributed to the stringent requirement that algorithm {\XORSample} be
provided a parameter $s$ that renders the model count of the input
formula $F$ constrained with $s$ random xor constraints to
\emph{exactly} $1$.  Our experiments indicated that it was extremely
difficult to predict or leapfrog the range of values for $s$ such that
it met the strict requirement of the model count being \emph{exactly}
$1$.  This forced us to expend significant computing resources to
estimate the right value value for $s$ in almost every run, leading to
huge performance overheads.  Since algorithm {\XORSampleprime}
consistently outperformed algorithm {\XORSample}, we focus on
comparisons with only algorithm {\XORSampleprime} in the subsequent
discussion.  \added{Note that our benchmarks, when viewed as Boolean
  circuits, had upto $695$ circuit inputs, and $21$ of them had more
  than $95$ inputs each.  While {\UniformWitness} and
  {\XORSampleprime} completed execution on all these benchmarks, we
  could not build ROBDDs for $18$ of the above $21$ benchmarks within
  our timeout limit and with 4GB of memory.}



Table~\ref{table:comparison} presents results of our experiments
comparing performance and uniformity of generated witnesses for
{\UniformWitness} and {\XORSampleprime} on a subset of benchmarks.
\added{The tool and the complete set of results on over $200$
  benchmarks} are available at \url{
  http://www.cfdvs.iitb.ac.in/reports/reports/CAV13/}.
\begin{table}[h]
\begin{center}	
\begin{tabular}{|c| c| c| c| c| c| c|| c| c|}
\hline
\multicolumn{3}{|c|}{}& \multicolumn{4}{|c|}{\UniformWitness} &
\multicolumn{2}{|c|}{\XORSampleprime} \\
\hline 
Benchmark & \#var & Clauses & k& \shortstack{Range\\(i)} & \shortstack{Average\\Run Time (s)}
&\shortstack{Var-\\iance}  & \shortstack{Average\\Run Time (s)} & \shortstack{Var-\\iance}\\
\hline
 
 
 
 \multirow{2}{*}{case\_3\_b14}&\multirow{2}{*}{779}&\multirow{2}{*}{2480}&2&[34,35]&49.29+5.27&\multirow{2}{*}{1.58}&\multirow{2}{*}{15061.85+59.31}&\multirow{2}{*}{3.47}\\ 
 
&&&3&[36,37]&19.32+1.44&&& \\ 
\hline
 case\_2\_b14&519&1607&3&[38,39]&22.13+2.09&0.57&18005.58+0.73&9.51\\ 
 \hline
case203&214&580&3&[42,44]&16.41+1.04&8.98&18006.85+2.78&230.5\\ 
 \hline
case145&219&558&3&[42,44]&19.84+1.42&1.62&18007.18+2.99&2.32\\ 
 \hline
case14&270&717&2&[44,45]&54.07+2.33&0.65&18004.8+0.9&28.16\\ 
 \hline
case61&289&773&3&[44,46]&30.39+5.49&1.33&18009.1+4.4&11.92\\ 
 \hline
case9&302&821&3&[45,47]&25.64+1.54&2.07&18004.79+0.87&46.15\\ 
 \hline
case10&351&946&2&[60,61]&204.93+17.99&6.1&18008.42+4.85&10.56\\ 
 \hline
 case15&319&842&3&[61,63]&91.84+14.64&0.82&18008.34+5.08&11.04\\ 
 \hline
case140&488&1222&3&[99,101]&288.63+23.53&3.4&21214.85+200.64&6.71\\ 
 \hline
squaring14&5397&18141&3&[28,30]&2399.19+1243.81&&7089.6+2088.46&\\ 
 \hline
squaring7&5567&18969&3&[26,29]&2358.45+1720.49&&4841.4+2340.84&\\ 
 \hline
case39&590&1789&2&[50,50]&710.65+85.22&&18159.12+138.22&\\ 
 \hline
case\_2\_ptb&7621&24889&3&[72,73]&1643.2+225.41&&22251.8+177.61&\\ 
 \hline
\multirow{2}{*}{case\_1\_ptb}&\multirow{2}{*}{7624}&\multirow{2}{*}{24897}&2&[70,70]&17295.45+454.64&\multirow{2}{*}{}&\multirow{2}{*}{22346.64+204.07}&\multirow{2}{*}{}\\ 
 
&&&3&[72,73]&1639.16+219.87&&& \\ 
 \hline
\end{tabular}
\end{center}
\caption{Performance comparison of {\UniformWitness} and {\XORSampleprime}}
\label{table:comparison}
\end{table}	
The first three columns in Table~\ref{table:comparison} give the name,
number of variables and number of clauses of the benchmarks
represented as CNF formulae.   The columns grouped
under {\UniformWitness} give details of runs of {\UniformWitness},
while those grouped under {\XORSampleprime} give details of runs of
{\XORSampleprime}.  For runs of {\UniformWitness}, the column labeled
``$k$'' gives the value of the parameter $k$ used in the corresponding
experiment.  The column labeled ``Range ($i$)'' shows the range of
values of $i$ when the loop in lines 7--12 of the pseudocode (see
Section~\ref{sec:our-algorithm}) terminated in $100$ independent runs
of the algorithm on the benchmark under consideration.  Significantly,
this range is uniformly narrow for all our experiments with
{\UniformWitness}.  As a result, leapfrogging $i$ is very effective
for {\UniformWitness}.

The column labeled ``Run Time'' under {\UniformWitness} in
Table~\ref{table:comparison} gives run times in seconds, separated as
$time_1 + time_2$, where $time_1$ gives the average time (over $100$
independent runs) to obtain a witness and to identify the lower bound
of $i$ for leapfrogging in later runs, while $time_2$ gives the
average time to get a solution once we leapfrog $i$.  Our experiments
clearly show that leapfrogging $i$ reduces run-times by almost an
order of magnitude in most cases.  We also report ``Run Time'' for
{\XORSampleprime}, where times are again reported as $time_1 +
time_2$.  In this case, $time_1$ gives the average time (over $100$
independent runs) taken to find the value of the parameter $s$ in
algorithm {\XORSampleprime} using a binary search technique, as
outlined in a footnote in~\cite{Gomes-Sampling}.  As can be seen from
Table~\ref{table:comparison}, this is a computationally expensive
step, and often exceeds $time_1$ under {\UniformWitness} by more than
two to three orders of magnitude.  Once the range of the parameter $s$
is identified from the first $100$ independent runs, we use the lower
bound of this range to leapfrog $s$ in subsequent runs of
{\XORSampleprime} on the same problem instance. The values of $time_2$
under ``Run Time'' for {\XORSampleprime} give the average time taken
to generate witnesses after leapfrogging $s$.  Note that the
difference between $time_2$ values for {\UniformWitness} and
{\XORSampleprime} algorithms is far less pronounced than the
difference between $time_1$ values.  In addition, the $time_1$ values
for {\XORSampleprime} are two to four orders of magnitude larger than
the corresponding $time_2$ values, while this factor is almost always
less than an order of magnitude for {\UniformWitness}. Therefore, the
total time taken for $n_1$ runs without leapfrogging, followed by
$n_2$ runs with leapfrogging for {\XORSampleprime} far exceeds that
for {\UniformWitness}, even for $n_1 = 100$ and $n_2 \approx 10^6$.
This illustrates the significant practical efficiency of
{\UniformWitness} vis-a-vis {\XORSampleprime}.

Table~\ref{table:comparison} also reports the scaled statistical
variance of relative frequencies of witnesses generated by $5 \times
10^4$ runs of the two algorithms on several benchmarks.  The scaled
statistical variance is computed as $\frac{K}{N-1}\sum\limits_{i=1}^N
\left(f_i-\left(\frac{\sum_{i=1}^N f_i}{N}\right)\right)^2$, where $N$
denotes the number of distinct witnesses generated, $f_i$ denotes the
relative frequency of the $i^{th}$ witness, and $K ~( 10^{10})$
denotes a scaling constant used to facilitate easier comparison.  The
smaller the scaled variance, the more uniform is the generated
distribution.  Unfortunately, getting a reliable estimate of the
variance requires generating witnesses from runs that sample the
witness space sufficiently well.  While we could do this for several
benchmarks (listed towards the top of Table~\ref{table:comparison}),
other benchmarks (listed towards the bottom of
Table~\ref{table:comparison}) had too large witness spaces to conduct
these experiments within available resources.  For those benchmarks
where we have variance data, we observe that the variance obtained
using {\XORSampleprime} is larger (by upto a factor of \added{$43$}) than
those obtained using {\UniformWitness} in almost all cases.  Overall,
our experiments indicate that {\UniformWitness} always works
significantly faster and gives more (or comparably) uniformly
distributed witnesses vis-a-vis {\XORSampleprime} in almost all cases.
We also measured the probability of success of {\UniformWitness} for
each benchmark as the ratio of the number of runs for which the
algorithm did not return $\bot$ to the total number of runs.  We found
that \added{this exceeded} $0.6$ for every benchmark using
{\UniformWitness}.

\begin{figure}[h]
\begin{minipage}[b]{0.5\linewidth}
\centering
\includegraphics[width=\textwidth]{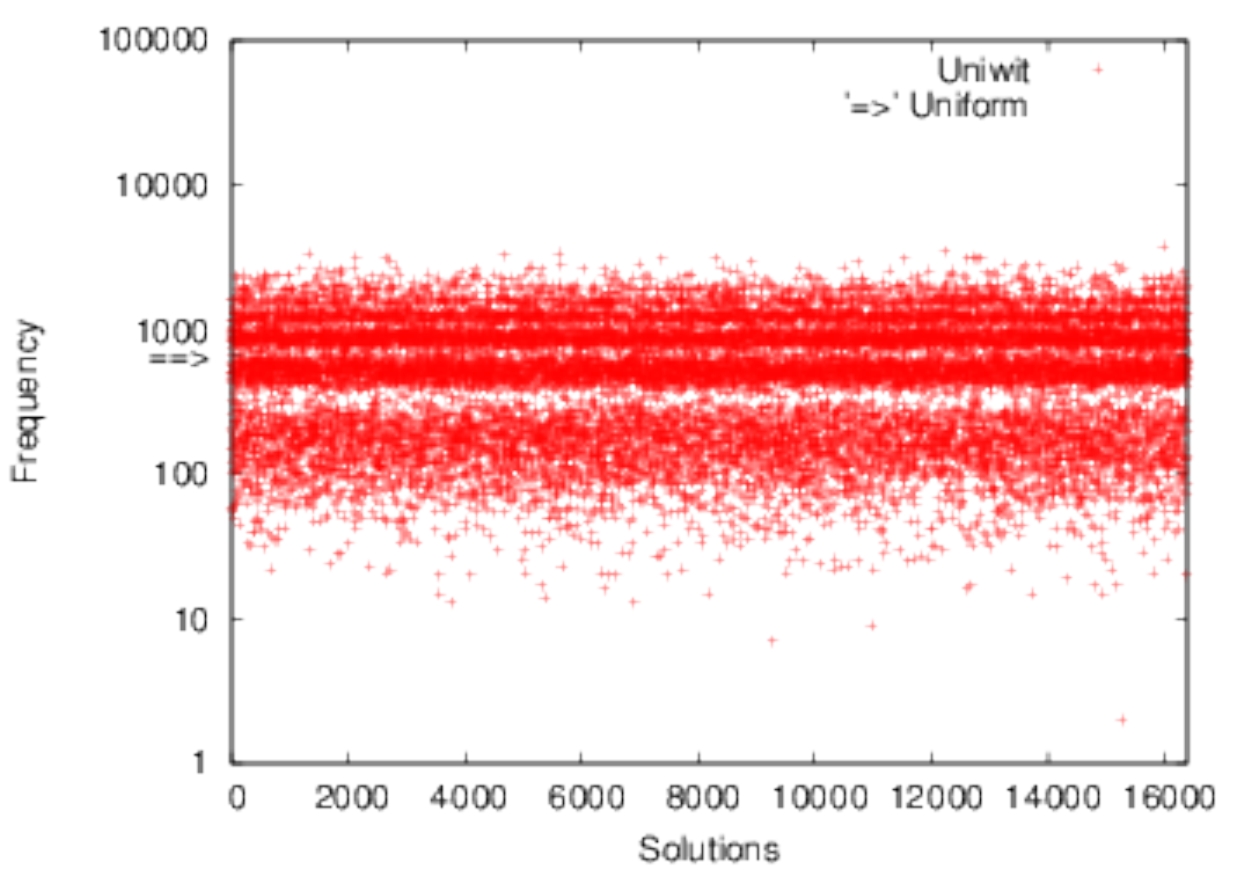}
\caption{Sampling by  {\UniformWitness} (k=2) }
\label{fig:Uniform}
\end{minipage}
\begin{minipage}[b]{0.5\linewidth}
\centering
\includegraphics[width=\textwidth]{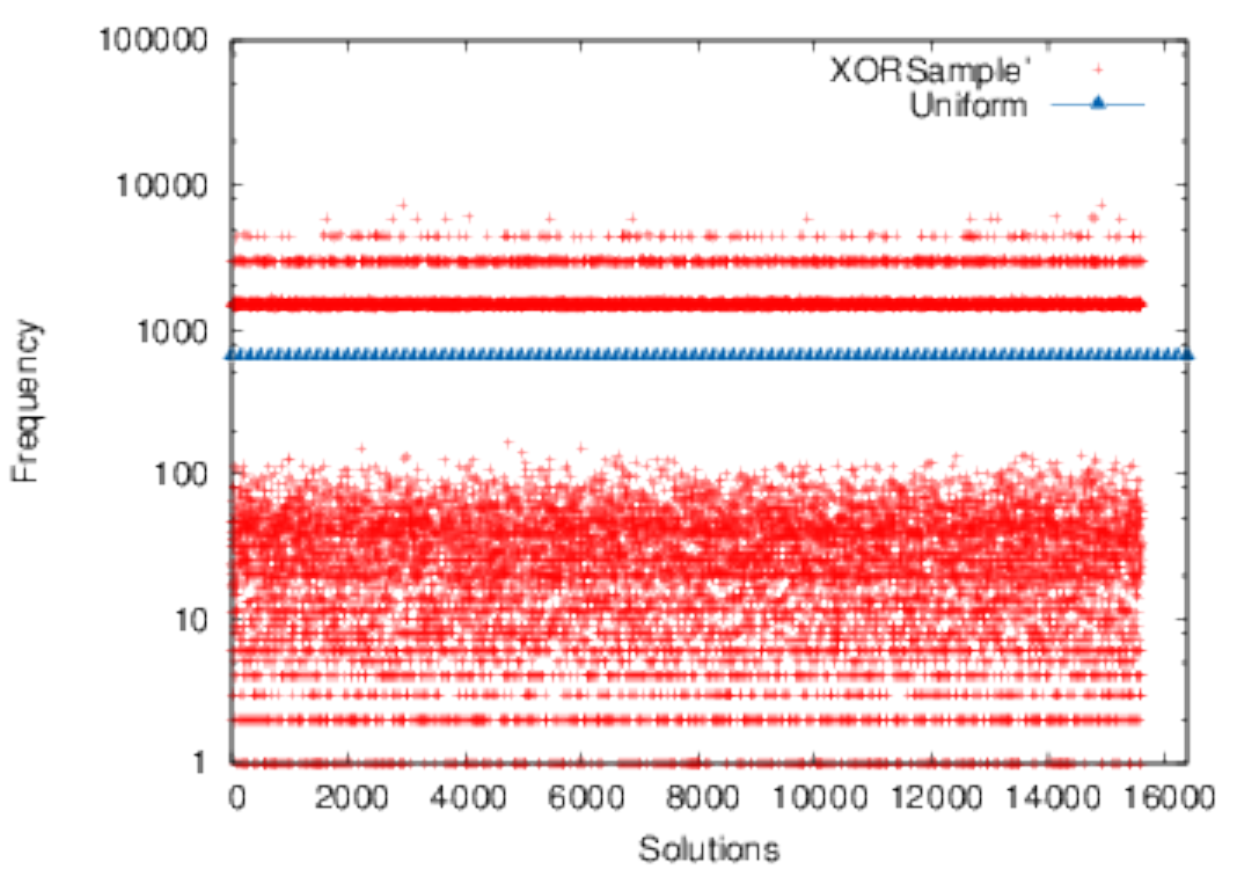}
\caption{Sampling by {\XORSampleprime}}
\label{fig:Selman}
\end{minipage}
\end{figure}
As an illustration of the difference in uniformity of witnesses
generated by {\UniformWitness} and {\XORSampleprime},
Figures~\ref{fig:Uniform} and \ref{fig:Selman} depict the frequencies
of appearance of various witnesses using these two algorithms for an
input CNF formula (case110) with $287$ variables and $16,384$
satisfying assignments.  The horizontal axis in each figure represents
witnesses numbered suitably, while the vertical axis
represents the generated frequencies of witnesses.  The frequencies
were obtained from $10.8 \times 10^6$ successful runs of each
algorithm.  Interestingly, {\XORSampleprime} could find only $15,612$
solutions (note the empty vertical band at the right end of
Figure~\ref{fig:Selman}), while {\UniformWitness} found all $16,384$
solutions.  Further, {\XORSampleprime} generated each of $15$
solutions more than $5,500$ times, and more than $250$ solutions were
generated only once.  No such major deviations from uniformity were
however observed in the frequencies generated by {\UniformWitness}.
\added{We also found that $15624$ out of $16384$ (i.e. $95.36\%$)
  witnesses generated by {\UniformWitness} had frequencies in excess
  of $N_{unif}/8$, where $N_{unif} = 10.8 \times 10^6/16384 \approx
  659$.}  In contrast, only $6047$ (i.e. $36.91\%$) witnesses
generated by {\XORSampleprime} had frequencies in excess of
$N_{unif}/8$.
\vspace*{-0.15in}
\section{Concluding Remarks}\label{sec:conclusion}

We described {\UniformWitness}, an algorithm that near-uniformly
samples random witnesses of Boolean formulas.  We showed that the
algorithm scales to reasonably large problems.  We also showed that it
performs better, in terms of both run time and uniformity, than
previous best-of-breed algorithms for this problem. The theoretical
guarantees can be further improved with higher independence of the
family of hash functions used in {\UniformWitness} (see \url{
  http://www.cfdvs.iitb.ac.in/reports/reports/CAV13/} for details).

\added{We have yet} to fully explore the parameter space and the effect of
pseudorandom generators other than HotBits for {\UniformWitness}.
There is a trade off between failure probability, time for first
witness, and time for subsequent witnesses.  During our experiments,
we observed the acute dearth of benchmarks available in the public
domain for this important problem. We hope that our work will lead to
development of benchmarks for this problem.  Our focus here has been
on Boolean constraints, which play a prominent role in hardware
design.  \added{Extending the algorithm to handle user-provided biases
  would be an interesting direction of future work.  Yet another
  interesting extension would be to consider richer constraint
  languages and build a uniform generator of witnesses modulo
  theories, leveraging recent progress in satisfiability modulo
  theories, c.f., \cite{dMB11}.}\\

\vspace*{-0.15in}

\bibliography{Report}
\bibliographystyle{plain}
\end{document}